\newtheorem{example}{Example}
\newtheorem{proposition}{Proposition}
\newtheorem{definition}{Definition}
\newtheorem{lemma}{Lemma}
\newenvironment{proof}{\noindent\textit{Proof~~}}
{\nolinebreak[4]\hfill$\blacksquare$\\\par}
\title{Strong isomorphism in Eisert-Wilkens-Lewenstein type quantum games}
\author{Piotr Fr\c{a}ckiewicz}
\affil{Institute of Mathematics\\ Pomeranian University, Poland}
\date{\today}
\begin{document}
\maketitle

\begin{abstract}
The aim of this paper is to bring together the notions of quantum game and game isomorphism. The work is intended as an attempt to introduce a new criterion for quantum game schemes. The generally accepted requirement forces a quantum scheme to generate the classical game in a particular case. Now, given a quantum game scheme and two isomorphic classical games, we additionally require the resulting quantum games to be isomorphic as well. We are concerned with the Eisert-Wilkens-Lewenstein quantum game scheme and the strong isomorphism between games in strategic form.
\end{abstract}
\section{Introduction}
Sixteen years of research on quantum games have given us many ideas of how quantum games could be described. For example, we have learned from \cite{mejer} that players who are allowed to use some specific unitary operators may gain an advantage over the players who use only classical strategies. The schemes introduced in \cite{ewl} and \cite{marinatto} give us two different ways of describing quantum $2\times 2$ games. The paper \cite{du}, in turn, provides us with a quantum scheme for the Cournot duopoly game. What connects these protocols is the capability to obtain the classical game. This appears to be a generally accepted necessary condition imposed on a quantum scheme. One can also find other and more subjective guidelines for quantum game schemes. Paper \cite{fracormwcomment} shows how to generalize the scheme introduced in \cite{marinatto} by assuming that the new model should output the classical game (up to the order of players' strategies) if the initial state is one of the computational basis states. In addition, the work of Bleiler \cite{bleiler} distinguishes between {\it proper} and more strict {\it complete} quantization. Roughly speaking, the first notion concerns quantum schemes where the counterparts of classical pure strategies can be found in pure quantum strategies. The second one requires the quantum strategy set to include the counterparts of the mixed classical strategies. With these notions the Marinatto-Weber (MW) \cite{marinatto} scheme turns out to be not even a proper quantization. The Eisert-Wilkens-Lewenstein (EWL) \cite{ewl} scheme, in turn, is a complete quantization, and this is the case as long as the players' quantum strategies include the one-parameter unitary operators $U(\theta,0,0)$ (see formula (\ref{generalunitary})). In particular, we can find a lot of papers where the EWL scheme was studied with the two-parameter unitary strategies \cite{experimental}, \cite{chen}, \cite{liiqbal}, \cite{nawaznowy}. However, as it was noted in \cite{benjamincomment} the set $\{U(\theta, \alpha,0)\}$ appears not to reflect any reasonable
physical constraint as this set is not closed under composition. Moreover, \cite{flitney} showed that different two-parameter strategy spaces in the EWL scheme imply different sets of Nash equilibria. In this paper we explain why the set of two-parameter unitary operators may not be reasonable from the game theory viewpoint. Our criterion is formulated in terms of isomorphic games. If we assume that both classical games are the same with respect to game theoretical tools, we require the corresponding quantum games to be equivalent in the same way. It is worth noting that quantum game schemes introduced in \cite{ewl}, \cite{marinatto} and the refined MW scheme defined in \cite{fracor2015} preserve the so-called {\it strategic equivalence}. We recall what this means. The following definition can be found in \cite{maschler}. See the Preliminary section for the definition of strategic form game $(N, (S_{i})_{i\in N}, (u_{i})_{i\in N})$ and its components.  
\begin{definition}
Two games in strategic form $(N, (S_{i})_{i\in N}, (u_{i})_{i\in N})$ and $(N, (S_{i})_{i\in N}, (v_{i})_{i\in N})$ with the same set of players and the same sets of pure strategies are strategically equivalent if for each player $i \in N$ the function $v_{i}$ is a positive affine transformation of the function $u_{i}$. In other words, there exist $\alpha_{i} > 0$ and $\beta_{i} \in \mathbb{R}$ such that
\begin{equation}\label{strategiceq}
v_{i}(s) = \alpha_{i}u_{i}(s) + \beta_{i},~~\mbox{for each}~~s\in \prod_{i\in N}S_{i}.
\end{equation}
\end{definition}
It is clear that given two strategically equivalent games $\Gamma$ and $\Gamma'$, the player $i$'s payoff operators $M_{i}$ and $M'_{i}$ in the quantum games are connected by equation $M'_{i} = \alpha_{i}M_{i} + \beta_{i}$. Then, by linearity of trace, the quantum payoff functions satisfy~(\ref{strategiceq}), i.e., $\mathrm{tr}(\rho_{\mathrm{fin}}M'_{i}) = \alpha_{i}\mathrm{tr}(\rho_{\mathrm{fin}}M_{i}) + \beta_{i}$. Virtually, strategically equivalent games $\Gamma$ and $\Gamma'$ describe the same game-theoretical problem. In particular, every equilibrium (pure or mixed) of the game $\Gamma$ is an equilibrium of the game $\Gamma'$.

The strategy equivalence can be extended to take into account different orders of players' strategies. This type of equivalence is included in the definition of strong isomorphism. Clearly, if for example, two bimatrix games differ only in the order of a player's strategies we still have the games that describe the same problem from the game theoretical viewpoint. Given a quantum scheme, it appears reasonable to assume that the resulting quantum game will not depend on the numbering of players' strategies in the classical game. As a result, if there is a strong isomorphism between games, we require that the quantum counterparts of these games are also isomorphic.
%%%%%%%%%%%%%%%%%%%%%%%
%%%%%%%%%%%%%%%%%%%%%%%
\section{Preliminaries}
In order to make our paper self-contained we give the important preliminaries from game theory and quantum game theory.
\subsection{Strong isomorphism}
First we recall the definition of strategic form game \cite{maschler}.
\begin{definition}
A game in strategic form is a triple $\Gamma = (N, (S_{i})_{i\in N}, (u_{i})_{i\in N})$ in which
\begin{itemize}
\item $N = \{1,2,\dots,n\}$ is a finite set of players.
\item $S_{i}$ is the set of strategies of player $i$, for each player $i\in N$.
\item $u_{i}\colon S_{1}\times S_{2} \times \dots \times S_{n} \to \mathbb{R}$ is a function associating each vector of strategies $s=(s_{i})_{i\in N}$ with the payoff $u_{i}(s)$ to player $i$, for every player $i\in N$.
\end{itemize}
\end{definition}
The notion of strong isomorphism defines classes of games that are the same up to numbering of the players and the order of players' strategies.
The following definitions are taken from \cite{garcia} (see also \cite{nash}, \cite{peleg1} and \cite{peleg2}). The first one defines a mapping that associates players and their actions in one game with players and their actions in the other game.
\begin{definition}
Given $\Gamma = (N, (S_{i})_{i\in N}, (u_{i})_{i\in N})$ and $\Gamma' = (N, (S'_{i})_{i\in N}, (u'_{i})_{i\in N})$, a game mapping $f$ from $\Gamma$ to $\Gamma'$ is a tuple $f = (\eta, (\varphi_{i})_{i\in N})$ where $\eta$ is a bijection from $N$ to $N$ and for any $i\in N$, $\varphi_{i}$~is a bijection from $S_{i}$ to $S_{\eta(i)}$.
\end{definition} 
\begin{example}\label{ex1}
\textup{Let us consider two bimatrix games}
\begin{equation}\label{gamesex1}
\bordermatrix{& l & r \cr t & (a_{00}, b_{00}) & (a_{01}, b_{01}) \cr b & (a_{10}, b_{10}) & (a_{11}, b_{11})} \quad \mbox{\textup{and}} \quad \bordermatrix{& l' & r' \cr t' & (a'_{00}, b'_{00}) & (a'_{01}, b'_{01}) \cr b' & (a'_{10}, b'_{10}) & (a'_{11}, b'_{11})}.
\end{equation}
\textup{Then, $N = \{1,2\}$ and $S_{1} = \{t,b\}$, $S_{2} = \{l,r\}$, $S'_{1} = \{t',b'\}$, $S'_{2} = \{l',r'\}$. As an example of a game mapping let $f = (\eta, \varphi_{1}, \varphi_{2})$,} 
\begin{equation}\label{exmapping}
\eta= (1\to 2, 2 \to 1),\; \varphi_{1} = (t\to l', b \to r'),\; \varphi_{2} = (l\to b', r\to t').
\end{equation}
\textup{Since $\varphi_{1} \colon S_{1} \to S'_{2}$ and $\varphi_{2} \colon S_{2} \to S'_{1}$, it follows that $f$ maps $(s_{1},s_{2}) \in S_{1} \times S_{2}$ to $(\varphi_{2}(s_{2}), \varphi_{1}(s_{1}))$. From~(\ref{exmapping}) we conclude that}
\begin{equation}
f = ((t,l) \to (b', l'), (t,r) \to (t', l'), (b,l) \to (b', r'), (b,r) \to (t', r')).
\end{equation}
\end{example}
In general case, mapping $f$ from $(N, (S_{i})_{i\in N}, (u_{i})_{i\in N})$ to $(N, (S'_{i})_{i\in N}, (u'_{i})_{i\in N})$ identifies player $i\in N$ with player $\eta(i)$ and maps $S_{i}$ to $S_{\eta(i)}$. This means that strategy profile $(s_{1},\dots, s_{n}) \in S_{1} \times \dots \times S_{n}$ is mapped into profile $(s'_{1}, \dots, s'_{n})$ that satisfies equation $s'_{\eta(i)} = \varphi_{i}(s_{i})$ for $i\in N$.

The notion of game mapping is a basis for definition of game isomorphism. Depending on how rich structure of the game is to be preserved we can distinguish various types of game isomorphism. One that preserves the players' payoff functions is called the strong isomorphism. The formal definition is as follows:
\begin{definition}
Given two strategic games $\Gamma = (N,(S_{i})_{i\in N}, (u_{i})_{i\in N})$ and $\Gamma' = (N, (S'_{i})_{i\in N}, (u'_{i})_{i\in N})$, a game mapping $f = (\eta, (\varphi_{i})_{i\in N})$ is called a strong isomorphism if relation $u_{i}(s) = u'_{\eta(i)}(f(s))$ holds for each $i\in N$ and each strategy profile $s\in S_{1}\times \dots \times S_{n}$. 
\end{definition} 
From the above definition it may be concluded that if there is a strong isomorphism between games $\Gamma$ and $\Gamma'$, they may differ merely by the numbering of players and the order of their strategies.
\begin{example}
\textup{Let $f$ be a game mapping defined in Example~\ref{ex1}. By definition, $f$ becomes the strong isomorphism if condition $u_{i}(s) = u'_{\eta(i)}(f(s))$ is imposed on the payoffs in~(\ref{gamesex1}). This gives}
\begin{align}\label{twoequations}
\begin{split}
a_{00} = b'_{10},\; a_{01} = b'_{00},\; a_{10} = b'_{11},\; a_{11} = b'_{01}, \cr
b_{00} = a'_{10},\; b_{01} = a'_{00},\; b_{10} = a'_{11},\; b_{11} = a'_{01}, \end{split}
\end{align}
\textup{where, for instance, $a_{01} = b'_{00}$ follows from equation $u_{1}((t,r)) = u'_{2}((t',l'))$. Substituting~(\ref{twoequations}) into~(\ref{gamesex1}) we conclude that games}
\begin{equation}\label{gamesex2}
\bordermatrix{& l & r \cr t & (a_{00}, b_{00}) & (a_{01}, b_{01}) \cr b & (a_{10}, b_{10}) & (a_{11}, b_{11})} \quad \mbox{\textup{and}} \quad \bordermatrix{& l' & r' \cr t' & (b_{01}, a_{01}) & (b_{11}, a_{11}) \cr b' & (b_{00}, a_{00}) & (b_{10}, a_{10})}.
\end{equation}
\textup{are isomorphic. In this case, the games differ by the numbering of players and the order of strategies of player 2. Indeed, in the second game of~(\ref{gamesex2}) player 1 and 2 choose now between columns and rows, respectively. Moreover, player 1's first (second) strategy still guarantees the payoff $a_{00}$ or $a_{01}$ ($a_{10}$ or $a_{11}$) whereas player 2's strategies are interchanged: the first one implies now the payoff $b_{01}$ or $b_{11}$. }
\end{example}
%%%%%%%%%%%%%%%%%%%%%%%%%%%%%%%%%%%%%%%%%
Relabeling players or their strategies does not affect a game with regard to Nash equilibria. If $f$ is a strong isomorphism between games $\Gamma$ and $\Gamma'$, one may expect that the Nash equilibria in $\Gamma$ map to ones in $\Gamma'$ under $f$. We will prove the following lemma as it is needed throughout the paper.
\begin{lemma}\label{lemma1}
Let $f$ be a strong isomorphism between games $\Gamma$ and $\Gamma'$. Strategy profile $s^{*} = (s^*_{1}, \dots, s^*_{n}) \in S_{1} \times \dots \times S_{n}$ is a Nash equilibrium in game $\Gamma$ if and only if $f(s^*) \in S'_{1}\times \dots \times S'_{n}$ is a Nash equilibrium in $\Gamma'$. 
\end{lemma}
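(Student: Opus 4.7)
The plan is to unfold both Nash equilibrium conditions and then translate between them by repeatedly invoking the defining property $u_i(s) = u'_{\eta(i)}(f(s))$ of strong isomorphism, together with the fact that $\eta$ and each $\varphi_i$ are bijections. The whole argument is essentially a bookkeeping exercise showing that unilateral deviations on one side correspond, under $f$, to unilateral deviations on the other side.

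First I would record the key "componentwise" observation: since $f(s)$ is defined by $s'_{\eta(i)} = \varphi_i(s_i)$, if $s^* = (s^*_1,\dots,s^*_n)$ and $\tilde{s} = (\tilde{s}_i, s^*_{-i})$ denotes the profile obtained from $s^*$ by replacing only the $i$-th coordinate, then $f(\tilde{s})$ coincides with $f(s^*)$ in every coordinate except the $\eta(i)$-th, where it equals $\varphi_i(\tilde{s}_i)$. This uses crucially that $\eta$ is a bijection on $N$, so no two changed indices collide.

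Next I would handle the forward direction. Assume $s^*$ is a Nash equilibrium of $\Gamma$. Pick any player $j\in N$ and any deviation $t'_j \in S'_j$ in $\Gamma'$; set $i := \eta^{-1}(j)$ and $\tilde{s}_i := \varphi_i^{-1}(t'_j) \in S_i$, which is well defined because $\varphi_i\colon S_i \to S_{\eta(i)} = S_j$ is a bijection. Applying the strong-isomorphism equation and the componentwise observation yields
\[
u'_j\bigl(t'_j, f(s^*)_{-j}\bigr) \;=\; u'_{\eta(i)}\bigl(f(\tilde{s}_i, s^*_{-i})\bigr) \;=\; u_i(\tilde{s}_i, s^*_{-i}) \;\le\; u_i(s^*) \;=\; u'_{\eta(i)}(f(s^*)) \;=\; u'_j(f(s^*)),
\]
where the inequality is the Nash condition for $s^*$ in $\Gamma$. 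Since $j$ and $t'_j$ were arbitrary, $f(s^*)$ is a Nash equilibrium of $\Gamma'$.

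For the converse I would simply observe that the inverse $f^{-1} = (\eta^{-1}, (\varphi_{\eta^{-1}(j)}^{-1})_{j\in N})$ is a strong isomorphism from $\Gamma'$ to $\Gamma$, so applying the forward direction to $f^{-1}$ and to the profile $f(s^*)$ gives the reverse implication without further work. I do not expect any serious obstacle; the only thing requiring care is the index bookkeeping when translating "player $i$ deviates in $\Gamma$" into "player $\eta(i)$ deviates in $\Gamma'$", which the componentwise observation in the second paragraph is designed to make routine.
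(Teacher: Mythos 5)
Your proposal is correct and follows essentially the same route as the paper's proof: both rest on the observation that $f$ sends a unilateral deviation $(s_i, s^*_{-i})$ to a unilateral deviation $\bigl(\varphi_i(s_i), f(s^*)_{-\eta(i)}\bigr)$ and that the bijectivity of $\eta$ and $\varphi_i$ makes this correspondence exhaustive. The only cosmetic difference is that the paper phrases the two directions as a single equivalence of inequalities, whereas you prove the forward direction and then invoke $f^{-1}$ for the converse; both are fine.
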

\begin{proof}
The proof is based on the following observation. Since $f((s_{1},\dots, s_{n})) = (s'_{1}, \dots, s'_{n})$ where $s'_{\eta(i)} = \varphi_{i}(s_{i})$, it follows that $f((s_{i}, s_{-i}))$ may be written as $\left(s'_{\eta(i)}, s'_{-\eta(i)}\right)$. As $f$ is an isomorphism, we have $u_{i}(s) = u'_{\eta(i)}(f(s))$ for each strategy profile $s$. Thus
\begin{equation}
u_{i}(s^*) = u'_{\eta(i)}\left(s^{*'}_{1}, \dots, s^{*'}_{n}\right)
\end{equation}
and
\begin{equation}
u_{i}\left(s_{i}, s^*_{-i}\right) = u'_{\eta(i)}\left(f\left(s_{i}, s^*_{-i}\right)\right) = u'_{\eta(i)}\left(s'_{\eta(i)}, s^{*'}_{-\eta(i)}\right).
\end{equation}
This allows us to conclude that the inequality
\begin{equation}
u_{i}(s^*) \geq u_{i}(s_{i}, s^*_{-i})
\end{equation}
holds for each $i\in N$ and each strategy $s_{i} \in S_{i}$ if and only if 
\begin{equation}
u'_{\eta(i)}\left(s^{*'}_{1}, \dots, s^{*'}_{n}\right) \geq u'_{\eta(i)}\left(s'_{\pi(i)}, s^{*'}_{-\eta(i)}\right)
\end{equation}
for each $\eta(i) \in N$ and each strategy $s'_{i} \in S'_{i}$.
This finishes the proof.
\end{proof}
%%%%%%%%%%%%%%%%%%%%%%%%%%%%%%%%%%%%%%%%%%%%%
\subsection{Eisert-Wilkens-Lewenstein scheme}
Let us consider a strategic game $\Gamma = (N, (S_{i})_{i\in N}, (u_{i})_{i\in N})$ with $S_{i} = \left\{s^i_{0}, s^i_{1}\right\}$ for each $i\in N$. The generalized Eisert-Wilkens-Lewenstein approach to game $\Gamma$ is defined by triple $\Gamma_{EWL} = \left(N, (D_{i})_{i\in N}, (M_{i})_{i\in N}\right)$, where
\begin{itemize}
\item $D_{i}$ is a set of unitary operators from $\mathsf{SU}(2)$. The commonly used parametrization for $U \in \mathsf{SU}(2)$ is given by
\begin{equation}\label{generalunitary}
U(\theta, \alpha, \beta) = \left(\begin{array}{ccc} \mathrm{e}^{\mathrm{i} \alpha}\cos{\frac{\theta}{2}} & \mathrm{i} \mathrm{e}^{\mathrm{i} \beta}\sin{\frac{\theta}{2}} \\ \mathrm{i} \mathrm{e}^{-\mathrm{i} \beta}\sin{\frac{\theta}{2}} & \mathrm{e}^{-\mathrm{i} \alpha}\cos{\frac{\theta}{2}}\end{array}\right), \; \theta \in [0,\pi], \; \alpha, \beta \in [0,2\pi).
\end{equation}
Then $D_{i}$ is assumed to include set $\{U(\theta, 0,0)\colon \theta\in [0,\pi]\}$. Elements $U_{i} \in D_{i}$ play the role of player $i$'s strategies. The players, by choosing $U_{i} \in D_{i}$, determine the final state $|\Psi\rangle$ according to the following formula:
\begin{equation}\label{finalstate}
|\Psi\rangle = J^{\dag}\left(\bigotimes^{n}_{i=1}U_{i}(\theta_{i}, \alpha_{i}, \beta_{i})\right)J|0\rangle^{\otimes n}\; \mbox{where}\; J = \frac{1}{\sqrt{2}}\left(\mathbb{1}^{\otimes n} + \mathrm{i}\sigma^{\otimes n}_{x}\right)
\end{equation}
($\mathbb{1}$ is the identity matrix of size 2 and $\sigma_{x}$ is the Pauli matrix $X$).
\item $M_{i}$ is an observable defined by the formula
\begin{equation}\label{payoffoperators}
M_{i} = \sum_{j_{1},\dots, j_{n} \in \{0,1\}}a^i_{j_{1}\dots j_{n}}|j_{1}\dots j_{n}\rangle \langle j_{1} \dots j_{n}|.
\end{equation}
The numbers $a^i_{j_{1}\dots j_{n}}$ are the player $i$'s payoffs in $\Gamma$ such that $a^i_{j_{1}\dots j_{n}} = u_{i}\left(s^i_{j_{1}}, \dots, s^i_{j_{n}}\right)$. The player $i$'s payoff $u_{i}$ in $\Gamma_{EWL}$ is defined as the average value of measurement $M_{i}$, i.e.,
\begin{equation}\label{payofffunctionq}
u_{i}\left(\bigotimes^n_{i=1}U_{i}(\theta_{i}, \alpha_{i}, \beta_{i})\right)\coloneqq \langle \Psi|M_{i}|\Psi\rangle.
\end{equation}
\end{itemize}
%%%%%%%%%%%%%%%%%%%%%%%%%%%%%%%%%%%%%%%%%%%%%%%%%%%%%%%%%%%%%%%%%%%%%%%%%%%%%%%%%%%%%%%%%%%%%%%%%%
\section{Strong isomorphism in Eisert-Wilkens-Lewenstein quantum games}
Having specified the notion of strong isomorphism and the generalized Eisert-Wilkens-Lewenstein scheme we will now check if the isomorphism between the classically played games makes the corresponding quantum games isomorphic. We first examine the case when the players' unitary strategies depend on two parameters. The quantum game $\Gamma_{EWL}$ with 
\begin{equation}\label{twoparameter}
D_{i} = \{U_{i}(\theta_{i}, \alpha_{i}, 0)\colon \theta_{i} \in [0,\pi], \alpha_{i} \in [0,2\pi)\} 
\end{equation}
is particularly interested. That setting was used to introduce the EWL scheme \cite{ewl} and has been widely studied in recent years (see, for example, \cite{experimental}, \cite{chen}, \cite{liiqbal}, \cite{nawaznowy}). We begin with an example of isomorphic games that describe the Prisoner's Dilemma. 
\begin{example}
\textup{The generalized Prisoner's Dilemma game and one of its isomorphic counterparts may be given by the following bimatrices:}
\begin{equation}\label{p3}
\Gamma\colon ~~\bordermatrix{& l & r \cr t & (R, R) & (S, T) \cr b & (T, S) & (P, P)} \quad \mbox{\textup{and}} \quad \Gamma'\colon ~~ \bordermatrix{& l' & r' \cr t' & (S, T) & (R, R) \cr b' & (P, P) & (T, S)},
\end{equation}
\textup{where $T>R>P>S$. Note that the games are the same up to the order of player 2' strategies. Let us now examine the EWL approach to $\Gamma$ and $\Gamma'$ defined by triples}
\begin{align}\label{p2}
\begin{split}
&\Gamma_{EWL} = \left(\{1,2\}, (\{U_{i}(\theta_{i}, \alpha_{i}, 0\}\colon )_{i\in \{1,2\}}, (M_{i})_{i\in \{1,2\}}\right), \cr 
&\Gamma'_{EWL} = \left(\{1,2\}, (\{U'_{i}(\theta'_{i}, \alpha'_{i}, 0\})_{i\in \{1,2\}}, (M'_{i})_{i\in \{1,2\}}\right),
\end{split}
\end{align}
\textup{where}
\begin{align}\label{payoffoperators1}
\begin{split}
&(M_{1}, M_{2}) = (R,R)|00\rangle \langle 00| + (S,T)|01\rangle \langle 01| + (T,S)|10\rangle \langle 10| + (P,P)|11\rangle \langle 11|, \cr
&(M'_{1}, M'_{2}) = (S,T)|00\rangle \langle 00| + (R,R)|01\rangle \langle 01| + (P,P)|10\rangle \langle 10| + (T,S)|11\rangle \langle 11|.
\end{split}
\end{align}
\textup{We first compare the sets of Nash equilibria in $\Gamma_{EWL}$ and $\Gamma'_{EWL}$ to check if the games may be isomorphic. We recall from \cite{ewl} that there is the unique Nash equilibrium $U_{1}(0,\pi/2,0)\otimes U_{2}(0,\pi/2,0)$ in $\Gamma_{EWL}$ that determines the payoff profile $(R,R)$. When it comes to $\Gamma'_{EWL}$, we set $n=2$ in (\ref{finalstate}) and replace~(\ref{payoffoperators}) by $M'_{1}$ and $M'_{2}$ from~(\ref{payoffoperators1}). Then we can rewrite~(\ref{payofffunctionq}) as}
\begin{align}\label{p1}
&(u'_{1}, u'_{2})\left(U_{1}(\theta'_{1}, \alpha'_{1},0) \otimes U_{2}(\theta'_{2}, \alpha'_{2},0)\right)\nonumber\\  &\quad= (S,T)\left(\cos{(\alpha'_{1} + \alpha'_{2})\cos{\frac{\theta'_{1}}{2}}}\cos{\frac{\theta'_{2}}{2}} \right)^2 \nonumber \\ &\quad + (R,R)\left(\cos{\alpha'_{1}}\cos{\frac{\theta'_{1}}{2}}\sin{\frac{\theta'_{2}}{2}} + \sin{\alpha'_{2}}\sin{\frac{\theta'_{1}}{2}}\cos{\frac{\theta'_{2}}{2}} \right)^2 \nonumber \\ &\quad + (P,P)\left(\sin{\alpha'_{1}}\cos{\frac{\theta'_{1}}{2}}\sin{\frac{\theta'_{2}}{2}} + \cos{\alpha'_{2}}\sin{\frac{\theta'_{1}}{2}}\cos{\frac{\theta'_{2}}{2}} \right)^2 \nonumber \\ &\quad + (T,S)\left(\sin(\alpha'_{1} + \alpha'_{2})\cos{\frac{\theta'_{1}}{2}}\cos{\frac{\theta'_{2}}{2}} - \sin{\frac{\theta'_{1}}{2}}\sin{\frac{\theta'_{2}}{2}} \right)^2.
\end{align}
\textup{Let $U'_{2}(\theta'_{2}, \alpha'_{2}, 0)$ be an arbitrary but fixed strategy of player 2. Then it follows from~(\ref{p1}) that strategy $U'_{1}(\theta'_{1}, \alpha'_{1}, 0)$ specified by equation}
\begin{equation}\label{equation2cases}
U'_{1}(\theta'_{1}, \alpha'_{1}, 0) = \begin{cases} U'_{1}(\theta'_{2}, 3\pi/2-\alpha'_{2},0) &\mbox{\textup{if}}~ \alpha'_{2} \in [0, 3\pi/2], \cr U'_{1}(\theta'_{2}, 7\pi/2-\alpha'_{2},0) &\mbox{\textup{if}}~ \alpha'_{2} \in (3\pi/2, 2\pi),\end{cases}
\end{equation}
\textup{is player 1's best reply to $U'_{2}(\theta'_{2}, \alpha'_{2}, 0)$ as it yields player 1 the payoff $T$. Hence, a possible Nash equilibrium would generate the maximal payoff for player 1. On the other hand, given a fixed player 1's strategy $U'_{1}(\theta'_{1}, \alpha'_{1}, 0)$, player 2 can obtain a payoff that is strictly higher than $S$ by choosing, for example, $U_{2}(\theta'_{2}, \alpha'_{2}, 0)$ with $\theta'_{2} = 0, \alpha'_{2} = 2\pi- \alpha'_{1}$. This means that the player 1 would obtain strictly less than $T$. Hence, there is no pure Nash equilibrium in the game determined by $\Gamma'_{EWL}$. As a result, we can conclude by Lemma~\ref{lemma1} that games~(\ref{p2}) are not strongly isomorphic.}
\end{example}
The example given above shows that the EWL approach with the two-parameter unitary strategies may output different Nash equilibria depending on the order of players' strategies in the classical game. This appears to be a strange feature since games~(\ref{p3}) represent the same decision problem from a game-theoretical point of view. 

One way to make games~(\ref{p2}) isomorphic is to replace player $i$'s strategy set~(\ref{twoparameter}) with the alternative two-parameter strategy space
\begin{equation}
F_{i} = \{U_{i}(\theta_{i}, 0, \beta_{i})\colon \theta_{i} \in [0,\pi], \beta_{i} \in [0,2\pi)\} 
\end{equation}
every time player $i$'s strategies are switched in the classical game. In the case of games~(\ref{p2}) this means that quantum games
\begin{equation}\label{games10}
\Gamma_{EWL} = \left(\{1,2\}, (D_{1}, D_{2}), (M_{i})_{i\in \{1,2\}}\right), \: \Gamma'_{EWL} = \left(\{1,2\}, (D'_{1}, F'_{2}), (M'_{i})_{i\in \{1,2\}} \right) 
\end{equation}
are isomorphic. Indeed, define a game map $\tilde{f} = (\eta, \tilde{\varphi}_{1}, \tilde{\varphi}_{2})$ with $\eta(i) = i$ for $i=1,2$ and bijections $\tilde{\varphi}_{1} \colon D_{1} \to D'_{1}$ and $\tilde{\varphi}_{2}\colon D_{2} \to F'_{2}$ satisfying 
\begin{equation}\label{funkcje10}
\tilde{\varphi}_{1}\left(U_{1}(\theta_{1}, \alpha_{1}, 0)\right) = U'_{1}(\theta_{1}, \alpha_{1}, 0), \: \tilde{\varphi}_{2}\left(U_{2}(\theta_{1}, \alpha_{1}, 0)\right) = U'_{2}(\pi-\theta_{2}, 0, \pi-\alpha_{2}).
\end{equation}
The map $\tilde{\varphi}_{2}$ should actually distinguish cases $\alpha_{2} \in [0,\pi)$ and $\alpha_{2} \in [\pi, 2\pi)$ to be a well-defined bijection as it was done in equation (\ref{equation2cases}). To simplify the proof we stick to the form~(\ref{funkcje10}) throughout the paper bearing in mind that for $\pi - \alpha_{2} \notin [0, 2\pi)$ we can always find the equivalent angle $3\pi - \alpha_{2} \in [0,2\pi)$. We have to show for games~(\ref{games10}) that 
\begin{equation}
u_{i}(U_{1}\otimes U_{2}) = \langle \Psi|M_{i} | \Psi \rangle = \langle \Psi'|M'_{i} | \Psi' \rangle  = u'_{i}(\tilde{f}(U_{1}\otimes U_{2}))
\end{equation}
for $i=1,2$, where $|\Psi\rangle = J^{\dag}(U_{1}\otimes U_{2})J|00\rangle$ and $|\Psi'\rangle = J^{\dag}(\tilde{f}(U_{1}\otimes U_{2}))J|00\rangle$. First, note that $U'_{2}(\pi - \theta_{2}, 0, \pi-\alpha_{2}) = \mathrm{i}\sigma_{x}U'_{2}(\theta_{2}, \alpha_{2}, 0)$. Hence, we obtain
\begin{align}\label{rozwazania10}
|\Psi'\rangle &= J^{\dag}\tilde{f}(U_{1}(\theta_{1}, \alpha_{1}, 0)\otimes U_{2}(\theta_{2}, \alpha_{2}, 0))J|00\rangle\nonumber\\
&= J^{\dag}(U'_{1}(\theta_{1}, \alpha_{1}, 0)\otimes U_{2}(\pi - \theta_{2}, 0, \pi - \alpha_{2}))J|00\rangle\nonumber \\
&=(\mathbb{1}\otimes (-\mathrm{i}\sigma_{x}))J^{\dag}(U'_{1}(\theta_{1}, \alpha_{1}, 0)\otimes U'_{2}(\theta_{2}, \alpha_{2}, 0))J|00\rangle \nonumber\\&= (\mathbb{1}\otimes (-\mathrm{i}\sigma_{x}))|\Psi\rangle.
\end{align}
Application of~(\ref{rozwazania10}) finally yields 
\begin{eqnarray}
\langle \Psi'|M'_{i} | \Psi' \rangle = \langle \Psi | (\mathbb{1}\otimes \mathrm{i} \sigma_{x}) M'_{i} (\mathbb{1}\otimes (-\mathrm{i} \sigma_{x}))|\Psi\rangle = \langle \Psi |M_{1}|\Psi \rangle.
\end{eqnarray}
In similar way we can prove a more general fact. Namely, if $F_{2}$ is player 2's strategy set in one of games~(\ref{p2}) and $D_{2}$ is in the the other one then games~(\ref{p2}) become strongly isomorphic. This observation suggests that the EWL scheme is robust with respect to changing the order of players' strategies in the classical game if the players can use strategies from $D_{i} \cup F_{i}$, or equivalently from the set $\mathsf{SU}(2)$. Before stating the general result we study a specific example.
%%%%%%%%%%%%%%%%%%%%%%%%%%%%%%%%%%%%%%%%%%%%%%%%%%%%%%%%%
%%%%%%%%%%%%
\begin{example}\label{exampleewlfull}
\textup{Let us consider the following three-person games:}
\begin{eqnarray}
v \quad \bordermatrix{& l & r \cr t & (a_{000}, b_{000}, c_{000}) & (a_{010}, b_{010}, c_{010}) \cr b & (a_{100}, b_{100}, c_{100}) & (a_{110}, b_{110}, c_{110})} \qquad w \quad \bordermatrix{& l & r \cr t & (a_{001}, b_{001}, c_{001}) & (a_{011}, b_{011}, c_{011}) \cr b & (a_{101}, b_{101}, c_{101}) & (a_{111}, b_{111}, c_{111})} \nonumber
\end{eqnarray}
\textup{and}
\begin{eqnarray}
v' \quad \bordermatrix{& l' & r' \cr t' & (a_{000}, b_{000}, c_{000}) & (a_{010}, b_{010}, c_{010}) \cr b' & (a_{100}, b_{100}, c_{100}) & (a_{110}, b_{110}, c_{110})} \qquad w' \quad \bordermatrix{& l' & r' \cr t' & (a_{001}, b_{001}, c_{001}) & (a_{011}, b_{011}, c_{011}) \cr b' & (a_{101}, b_{101}, c_{101}) & (a_{111}, b_{111}, c_{111})}.\nonumber
\end{eqnarray}
\textup{The games are (strongly) isomorphic via game mapping $f=(\eta, \varphi_{1}, \varphi_{2}, \varphi_{3})$ such that}
\begin{align}\label{iso1}\begin{split}
&\eta = (1 \to 2, 2 \to 3, 3 \to 1), \cr &\varphi_{1} = (t \to l', b\to r'),\, \varphi_{2} = (l \to w', r \to v'),\, \varphi_{3} = (v \to b', w \to t'). \end{split}
\end{align}
\textup{We see from (\ref{iso1}) that the isomorphism maps strategy profiles as follows:}
\begin{align}
\begin{split}
&f(t,l,v) = (b',l',w'),\, f(t,r,v) = (b',l',v'),\, f(b,l,v) = (b',r',w'), \cr 
&f(b,r,v) = (b',r',v'),\, f(t,l,w)= (t',l',w'),\, f(t,r,w) = (t',l',v') \cr &f(b,l,w) = (t',r',w'),\, f(b,r,w) = (t',r',v'). \end{split}
\end{align}

\textup{Let us now define the EWL quantum extensions $\Gamma_{EWL}$ and $\Gamma'_{EWL}$ for the three-player game where we identify the players' first and second strategies with values 0 and 1, respectively. That is,}
\begin{equation}\label{2ewl}
\Gamma_{EWL} = (N, (D_{i})_{i\in N}, (M_{i})_{i\in N})\quad\mbox{\textup{and}}\quad\Gamma'_{EWL} = (N, (D'_{i})_{i\in N}, (M'_{i})_{i\in N})
\end{equation}
\textup{where} $N = \{1,2,3\}$, $D_{i} = D'_{i} = \mathsf{SU}(2)$ \textup{for each} $i\in N$,
\begin{align}
&(M_{1}, M_{2}, M_{3}) = \sum_{j_{1}, j_{2}, j_{3}=0,1}(a_{j_{1}j_{2}j_{3}},b_{j_{1}j_{2}j_{3}}, c_{j_{1}j_{2}j_{3}})P_{j_{1}j_{2}j_{3}},
\\
&(M'_{1}, M'_{2}, M'_{3}) = \sum_{j_{1}, j_{2}, j_{3}=0,1}(c_{j_{1}j_{2}j_{3}}, a_{j_{1}j_{2}j_{3}}, b_{j_{1}j_{2}j_{3}})P_{f({j_{1}j_{2}j_{3}})},
\end{align}
\textup{where $P_{j_{1}j_{2}j_{3}} = |j_{1}j_{2}j_{3}\rangle \langle j_{1}j_{2}j_{3}|$. Given $f = (\eta, (\varphi_{i})_{i\in N})$ let us define a mapping $\tilde{f} = (\eta, (\tilde{\varphi_{i}})_{i\in N})$ such that $\tilde{\varphi_{i}}\colon D_{i} \to D'_{\eta(i)}$ for $i\in N$ and}
\begin{align}\begin{split}
&\tilde{\varphi_{1}}(U_{1}(\theta_{1}, \alpha_{1}, \beta_{1})) = U'_{2}(\theta_{1}, \alpha_{1}, \beta_{1}), \cr &\tilde{\varphi_{2}}(U_{2}(\theta_{2}, \alpha_{2}, \beta_{2})) = U'_{3}(\pi - \theta_{2}, 2\pi - \beta_{2}, \pi - \alpha_{2}), \cr &\tilde{\varphi_{3}}(U_{3}(\theta_{3}, \alpha_{3}, \beta_{3})) = U'_{1}(\pi - \theta_{3}, 2\pi - \beta_{3}, \pi - \alpha_{3}).
\end{split}
\end{align}
\textup{Then, $\tilde{f}$ induces a bijection from $D_{1}\otimes D_{2} \otimes D_{3}$ to $D'_{1}\otimes D'_{2} \otimes D'_{3}$ such that}
\begin{align}
\tilde{f}(U_{1}\otimes U_{2} \otimes U_{3}) &= (\tilde{\varphi_{3}}(U_{3}(\theta_{3}, \alpha_{3}, \beta_{3})), \tilde{\varphi_{1}}(U_{1}(\theta_{1}, \alpha_{1}, \beta_{1}))\otimes \tilde{\varphi}_{2}(U_{2}(\theta_{2}, \alpha_{2}, \beta_{2}))) \nonumber \\ &=U'_{1}(\pi-\theta_{3}, 2\pi - \beta_{3}, \pi-\alpha_{3})\otimes U'_{2}(\theta_{1}, \alpha_{1}, \beta_{1})\otimes U'_{3}(\pi - \theta_{2}, 2\pi - \beta_{2}, \pi - \alpha_{2}). \nonumber 
\end{align}
\textup{According to the EWL scheme, the payoff functions for $\Gamma_{EWL}$ and $\Gamma'_{EWL}$ are as follows:}
\begin{align}\begin{split}
&u_{i}(U_{1}\otimes U_{2} \otimes U_{3}) = \langle \Psi|M_{i}|\Psi\rangle,\,\mbox{\textup{where}}\,|\Psi\rangle = J^{\dag}(U_{1}\otimes U_{2} \otimes U_{3})J|000\rangle \cr &u'_{i}(U'_{1}\otimes U'_{2} \otimes U'_{3}) = \langle \Psi'|M'_{i}|\Psi'\rangle,\,\mbox{\textup{where}}\,|\Psi'\rangle = J^{\dag}(U'_{1}\otimes U'_{2} \otimes U'_{3})J|000\rangle \end{split}
\end{align}
\textup{for $i\in N$. In order to prove that $\Gamma_{EWL}$ and $\Gamma'_{EWL}$ are isomorphic we have to check if}
\begin{equation}
u_{i}(U_{1}\otimes U_{2} \otimes U_{3}) = u'_{\eta(i)}(\tilde{f}(U_{1}\otimes U_{2} \otimes U_{3}))\; \mbox{\textup{for}}\; i\in N. 
\end{equation}
\textup{Without loss of generality we can assume that $i=1$. Let us first evaluate state $|\Psi'\rangle$,}
\begin{equation}\label{psi}
|\Psi'\rangle = J^{\dag}\left(U'_{1}(\pi-\theta_{3}, 2\pi - \beta_{3}, \pi-\alpha_{3})\otimes U'_{2}(\theta_{1}, \alpha_{1}, \beta_{1})\otimes U'_{3}(\pi - \theta_{2}, 2\pi - \beta_{2}, \pi - \alpha_{2})\right)J|000\rangle.
\end{equation}
\textup{Note that}
\begin{align}\label{redukcja}
&U'_{1}(\pi - \theta_{3}, 2\pi - \beta_{3}, \pi-\alpha_{3}) \otimes U'_{2}(\theta_{1}, \alpha_{1}, \beta_{1}) \otimes U'_{3}(\pi - \theta_{2}, 2\pi - \beta_{2}, \pi - \alpha_{2}) \nonumber\\
&\quad= (-\sigma_{x}\otimes \mathbb{1} \otimes \sigma_{x})(U'_{1}(\theta_{3},\alpha_{3}, \beta_{3}) \otimes U'_{2}(\theta_{1}, \alpha_{1}, \beta_{1}) \otimes U'_{3}(\theta_{2}, \beta_{2}, \alpha_{2}))
\end{align}
\textup{and}
\begin{align}\label{Spi}
&U'_{1}(\theta_{3}, \alpha_{3}, \beta_{3})\otimes U'_{2}(\theta_{1}, \alpha_{1}, \beta_{1})\otimes U'_{3}(\theta_{2}, \alpha_{2}, \beta_{2})\nonumber\\ &\quad=S_{\eta}\left(U'_{2}(\theta_{1}, \alpha_{1}, \beta_{1}) \otimes U'_{3}(\theta_{2}, \alpha_{2}, \beta_{2})\otimes U'_{1}(\theta_{3}, \alpha_{3}, \beta_{3})\right)S^{\dag}_{\eta},
\end{align}
\textup{where $S_{\eta}$ is a permutation matrix that changes the order of qubits according to $\eta$,}
\begin{align}
S_{\eta} &= |000\rangle \langle 000| + |001\rangle \langle 010| + |010\rangle \langle 100| + |011\rangle \langle 110| \nonumber \\ &\quad + |100\rangle \langle 001| + |101\rangle \langle 011| + |110\rangle \langle 101| + |111\rangle \langle 111|.
\end{align}
\textup{Using (\ref{redukcja}), (\ref{Spi}), the fact that $[J^{\dag}, -\sigma_{x}\otimes \mathbb{1}\otimes \sigma_{x}] =[J^{\dag}, S_{\eta}] = [J, S_{\eta}] = 0$ and $S_{\eta}^{\dag}|000\rangle = |000\rangle$ we may write $|\Psi'\rangle$ as follows:}
\begin{align}\label{psiprim}
|\Psi'\rangle &= -\left(\sigma_{x}\otimes \mathbb{1} \otimes \sigma_{x}\right)S_{\eta}J^{\dag}\left(U'_{2}(\theta_{1}, \alpha_{1}, \beta_{1})\otimes U'_{3}(\theta_{2}, \alpha_{2}, \beta_{2}) \otimes U'_{1}(\theta_{3}, \alpha_{3}, \beta_{3})\right)J|000\rangle \nonumber \\ &=-\left(\sigma_{x}\otimes \mathbb{1} \otimes \sigma_{x}\right)S_{\eta}|\Psi\rangle
\end{align}
\textup{Note that $\langle j_{1}j_{2}j_{3}|S_{\eta} = \left(S^{\dag}_{\eta}|j_{1}j_{2}j_{3}\rangle\right)^{\dag}$. This means that $S_{\eta}$ is the inverse operation when acting on dual vectors. This observation together with the fact that $f$ changes the strategy order for player 1 and 3 lead us to conclusion that operator $\pm(\sigma_{x}\otimes \mathbb{1} \otimes \sigma_{x})S_{\eta}$ can be viewed as $f^{-1}$ in the sense of the following equality:}
\begin{equation}\label{endeq}
|\langle j_{1}j_{2}j_{3}|(-\sigma_{x}\otimes \mathbb{1} \otimes \sigma_{x})S_{\eta}| = |\langle f^{-1}(j_{1}j_{2}j_{3})|.
\end{equation}
\textup{Let us now consider term $\langle \Psi'|P_{f(j_{1}j_{2}j_{3})}|\Psi'\rangle$ for $|\Psi'\rangle$ given by (\ref{psi}). From~(\ref{psiprim}) and (\ref{endeq}) it follows that}
\begin{align}
\langle \Psi'|P_{f(j_{1}j_{2}j_{3})}|\Psi'\rangle &= |\langle f(j_{1}j_{2}j_{3})|\Psi'\rangle|^2 = |\langle f(j_{1}j_{2}j_{3})|(\sigma_{x}\otimes \mathbb{1} \otimes \sigma_{x})S_{\eta}|\Psi\rangle|^2 \nonumber\\&= |\langle j_{1}j_{2}j_{3}|\Psi\rangle|^2 = \langle\Psi|P_{j_{1}j_{2}j_{3}}|\Psi\rangle.
\end{align}
\textup{Hence,}
\begin{equation}\label{finaleq}
u_{\eta(1)}(\tilde{f}(U_{1}\otimes U_{2} \otimes U_{3})) = \langle \Psi'|M'_{\eta(1)}|\Psi'\rangle = \langle \Psi|M_{1}|\Psi\rangle = u_{1}(U_{1}\otimes U_{2} \otimes U_{3}).
\end{equation}
\textup{Similar reasoning applies to the case $i=2,3$. We have thus proved that games given by (\ref{2ewl}) are isomorphic.}
\end{example}
The same conclusion can be drawn for games with arbitrary but finite number $N$ of players.
%%%%%%%%%%%%%%%%%%%%%%%%%%%%%%%%%%%%%%%%%%%%%%%%%%%%%%%%%%
\begin{proposition}
Let $\Gamma = (N, (S_{i})_{i\in N}, (u_{i})_{i\in N})$ and $\Gamma' = (N, (S'_{i})_{i\in N}, (u'_{i})_{i\in N})$ be strongly isomorphic strategic form games with $|S_{i}| = |S'_{i}| = 2$ and let $\Gamma_{EWL} = (N, (D_{i})_{i\in N}, (M_{i})_{i\in N})$ and $\Gamma'_{EWL} = (N, (D'_{i})_{i\in N}, (M'_{i})_{i\in N})$ with $D_{i} = D'_{i} = \mathsf{SU}(2)$ be the corresponding quantum games. Then $\Gamma_{EWL}$ and $\Gamma'_{EWL}$ are strongly isomorphic.
\end{proposition}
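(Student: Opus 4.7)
The plan is to extend the three-player construction of Example~\ref{exampleewlfull} to the general $n$-player setting; the ingredients are a single $\mathsf{SU}(2)$ identity combined with careful bookkeeping of the qubit permutation induced by $\eta$.

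Since $|S_i|=|S'_i|=2$, each bijection $\varphi_i$ in $f=(\eta,(\varphi_i))$ is either the identity on strategy labels or the swap. I will define the candidate quantum game mapping $\tilde f = (\eta, (\tilde\varphi_i)_{i\in N})$ with $\tilde\varphi_i\colon D_i\to D'_{\eta(i)}$ given by $\tilde\varphi_i(U(\theta,\alpha,\beta))=U(\theta,\alpha,\beta)$ whenever $\varphi_i$ is the identity, and $\tilde\varphi_i(U(\theta,\alpha,\beta))=U(\pi-\theta,2\pi-\beta,\pi-\alpha)$ whenever $\varphi_i$ is the swap. The key algebraic identity is $U(\pi-\theta,2\pi-\beta,\pi-\alpha) = -\mathrm{i}\sigma_x U(\theta,\alpha,\beta)$, which I verify directly from~(\ref{generalunitary}). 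Introducing indicators $T_j\in\{0,1\}$ with $T_j=1$ iff $\varphi_{\eta^{-1}(j)}$ is the swap, this identity combined with the permutation relation used in~(\ref{Spi}) yields $\tilde f(U_1\otimes\cdots\otimes U_n) = (-\mathrm{i})^{|T|}A\,S_\eta(U_1\otimes\cdots\otimes U_n)S_\eta^{\dag}$, where $A=\bigotimes_{j=1}^{n}\sigma_x^{T_j}$, $|T|=\sum_j T_j$, and $S_\eta$ is the $n$-qubit analogue of the permutation in Example~\ref{exampleewlfull}, sending content at position $k$ to position $\eta(k)$.

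Next I compute $|\Psi'\rangle = J^{\dag}\tilde f(U_1\otimes\cdots\otimes U_n)J|0\rangle^{\otimes n}$. Since $A$ is a tensor product of $\sigma_x$-factors and $S_\eta$ merely permutes identical qubits, both commute with $\sigma_x^{\otimes n}$ and hence with $J$ and $J^{\dag}$. Combined with $S_\eta^{\dag}|0\rangle^{\otimes n}=|0\rangle^{\otimes n}$, this gives $|\Psi'\rangle = (-\mathrm{i})^{|T|}A S_\eta|\Psi\rangle$. By construction of $T_j$ one checks on computational basis states that $A S_\eta|j_1\ldots j_n\rangle = |f(j_1\ldots j_n)\rangle$; taking adjoints and using $A^{\dag}=A$, $S_\eta^{\dag}=S_\eta^{-1}$ yields $\langle f(j_1\ldots j_n)|A S_\eta = \langle j_1\ldots j_n|$. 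Since $(-\mathrm{i})^{|T|}$ is a global phase, this gives $\langle\Psi'|P_{f(j_1\ldots j_n)}|\Psi'\rangle = \langle\Psi|P_{j_1\ldots j_n}|\Psi\rangle$ for every bit string, exactly as in~(\ref{endeq}).

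To conclude, the hypothesis that $f$ is a strong isomorphism gives $u_i(s^1_{j_1},\ldots,s^n_{j_n}) = u'_{\eta(i)}(f(s^1_{j_1},\ldots,s^n_{j_n}))$, which rewrites $M'_{\eta(i)}$ as $\sum_{j_1,\ldots,j_n} u_i(s^1_{j_1},\ldots,s^n_{j_n})P_{f(j_1\ldots j_n)}$. Combining this with the previous identity yields $\langle\Psi'|M'_{\eta(i)}|\Psi'\rangle = \langle\Psi|M_i|\Psi\rangle$, i.e., $u'_{\eta(i)}(\tilde f(U_1\otimes\cdots\otimes U_n)) = u_i(U_1\otimes\cdots\otimes U_n)$ for all $i\in N$, which is precisely what is required for $\tilde f$ to be a strong isomorphism between $\Gamma_{EWL}$ and $\Gamma'_{EWL}$. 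The one nontrivial obstacle is the indexing in the decomposition of $\tilde f$: one must ensure that the swap factors $\sigma_x^{T_j}$ sit at the \emph{output} positions (indexed by $j$, with $T_j$ determined by the original player $k=\eta^{-1}(j)$) so that the operator $A S_\eta$ genuinely implements the bit-string action of $f$. Once that indexing is fixed, the commutation and phase calculations are routine.
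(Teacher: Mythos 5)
Your proposal is correct and takes essentially the same route as the paper's proof: the same case-split definition of $\tilde\varphi_i$, the same identity $U(\pi-\theta,2\pi-\beta,\pi-\alpha)=-\mathrm{i}\sigma_xU(\theta,\alpha,\beta)$, the same factorization of $\tilde f(\bigotimes_iU_i)$ into a tensor product of $\sigma_x$-factors times the qubit permutation $S_\eta$, and the same commutation with $J$ leading to $\langle\Psi'|P_{f(j_1\ldots j_n)}|\Psi'\rangle=\langle\Psi|P_{j_1\ldots j_n}|\Psi\rangle$. Your explicit bookkeeping $T_j=1$ iff $\varphi_{\eta^{-1}(j)}$ swaps (placing the $\sigma_x$ factors at the \emph{output} slots) and your explicit rewriting of $M'_{\eta(i)}$ via $u_i(s)=u'_{\eta(i)}(f(s))$ make precise two points the paper's proof leaves implicit (its $V_i$ is indexed by the condition on $\varphi_i$ rather than $\varphi_{\eta^{-1}(i)}$, and the final step is only sketched by reference to the example), but the argument is the same.
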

\begin{proof}
The proof follows by the same method as in Example~\ref{exampleewlfull}. Let $f=(\eta, (\varphi_{i})_{i\in N})$ be a strong isomorphism between $\Gamma$ and $\Gamma'$. Depending on $\varphi_{i}\colon S_{i} \to S'_{\eta(i)}$ such that $\varphi_{i}(s^i_{k}) = s^{\eta(i)}_{l}$ for $A_{i} = \{s^{i}_{0},s^{i}_{1}\}$ and $A_{\eta(i)} = \{s^{\eta(i)}_{0},s^{\eta(i)}_{1}\}$ we construct $\tilde{f} = (\eta, (\tilde{\varphi_{i}})_{i\in N})$ where
\begin{equation}\label{podwojnywzor}
\tilde{\varphi_{i}}(U_{i}(\theta_{i}, \alpha_{i}, \beta_{i})) = \begin{cases}U'_{\eta(i)}(\theta_{i}, \alpha_{i}, \beta_{i}) &\mbox{if}~ \varphi_{i}\left(s^i_{k}\right) = s^{\eta(i)}_{k} \cr U'_{\eta(i)}(\pi -\theta_{i},2\pi - \beta_{i}, \pi -\alpha_{i}) & \mbox{if}~ \varphi_{i}\left(s^i_{k}\right) = s^{\eta(i)}_{k\oplus_{2}1}.\end{cases}
\end{equation}
Then $\tilde{f}\left(\bigotimes^N_{i=1}U_{i}\right) = \bigotimes^N_{i=1}U'_{i}$, where $U'_{\eta(i)} = \tilde{\varphi_{i}}(U_{i})$ for $i=1,\dots N$. Since $\eta$ is a permutation and $U(\pi - \theta, 2\pi - \beta, \pi-\alpha) = -\mathrm{i}\sigma_{x}U(\theta, \alpha, \beta)$, we can write relation~(\ref{podwojnywzor}) as
\begin{equation}\label{podwojnywzor2}
\tilde{\varphi}_{\eta^{-1}(i)}(U_{\eta^{-1}(i)}(\theta_{\eta^{-1}(i)}, \alpha_{\eta^{-1}(i)}, \beta_{\eta^{-1}(i)})) = \begin{cases}U'_{i}(\theta_{\eta^{-1}(i)}, \alpha_{\eta^{-1}(i)}, \beta_{\eta^{-1}(i)}) &\mbox{if}~ \varphi_{i}\left(s^i_{k}\right) = s^{\eta(i)}_{k} \cr -\mathrm{i}\sigma_{x}U'_{i}(\theta_{\eta^{-1}(i)},\alpha_{\eta^{-1}(i)}, \beta_{\eta^{-1}(i)}) & \mbox{if}~ \varphi_{i}\left(s^i_{k}\right) = s^{\eta(i)}_{k\oplus_{2}1}.\end{cases}
\end{equation}
As a result, $\tilde{f}$ maps $\bigotimes^{N}_{i=1}U_{\eta^{-1}(i)}$ onto $\bigotimes^{N}_{i=1}U'_{i}$ as follows:
\begin{equation}
\tilde{f}\left(\bigotimes^{N}_{i=1}U_{i}\right) = \bigotimes^{N}_{i=1}V_{i}\bigotimes^N_{i=1}U'_{i}(\theta_{\eta^{-1}(i)}, \alpha_{\eta^{-1}(i)}, \beta_{\eta^{-1}(i)}), \quad V_{i} = \begin{cases}\mathbb{1} &\mbox{if}~ \varphi_{i}\left(s^i_{k}\right) = s^{\eta(i)}_{k} \cr -\mathrm{i}\sigma_{x} &\mbox{if}~ \varphi_{i}\left(s^i_{k}\right) = s^{\eta(i)}_{k\oplus_{2}1}.\end{cases}
\end{equation}
Let us now consider a permutation matrix $S_{\eta} \in M_{2^{N}}$ that rearranges the order of basis states $\{|j_{i}\rangle\} \in \{|0\rangle, |1\rangle\}$  in the tensor product $|j_{1}\rangle|j_{2}\rangle\dots|j_{N}\rangle$. Since $S_{\eta}$ permutes the elements in a similar way as $\tilde{f}$, it is not difficult to see that
\begin{equation}
S_{\eta}\bigotimes^{N}_{i=1}U_{i}(\theta_{i}, \alpha_{i}, \beta_{i})S^{T}_{\eta} = \bigotimes^{N}_{i=1}U_{i}(\theta_{\eta^{-1}(i)}, \alpha_{\eta^{-1}(i)}, \beta_{\eta^{-1}(i)})
\end{equation}
It is also clear that $\sigma_{x}^{\otimes N}$ commutes with $\bigotimes^{N}_{i=1}V_{i}$ and $S_{\eta}$ and so does $J = (\mathbb{1}^{\otimes N} + \mathrm{i}\sigma_{x}^{\otimes N})/\sqrt{2}$. Thus the final state $|\Psi'\rangle = J^{\dag}\tilde{f}\left(\bigotimes^{N}_{i=1}U_{i}(\theta_{i}, \alpha_{i}, \beta_{i})\right)J|0\rangle^{\otimes N}$ may be written as
\begin{equation}
|\Psi'\rangle = \bigotimes^N_{i=1}V_{i}S_{\eta}J^{\dag}\left(\bigotimes^{N}_{i=1}U_{i}(\theta_{i}, \alpha_{i}, \beta_{i})\right)J|0\rangle^{\otimes N} = \bigotimes^N_{i=1}V_{i}S_{\eta}|\Psi\rangle.
\end{equation}
Analysis similar to that in equations~(\ref{endeq})-(\ref{finaleq}) shows that 
\begin{equation}
u_{\eta(i)}\left(\tilde{f}\left(\bigotimes^{N}_{i=1}U_{i}\right)\right) = u_{i}\left(\bigotimes^{N}_{i=1}U_{i}\right),
\end{equation}
which is the desired conclusion.
\end{proof}
As the following example shows, the converse is not true in general.
\begin{example}
\textup{Let us consider two $2\times 2$ bimatrix games that differ only in the order of payoff profiles in the anti-diagonal, i.e.,}
\begin{equation}\label{antidiagonal}
\Gamma\colon \quad \bordermatrix{& l & r \cr t & (a_{00}, b_{00}) & (a_{01}, b_{01}) \cr b & (a_{10}, b_{10}) & (a_{11}, b_{11})} \quad \mbox{\textup{and}} \quad \Gamma'\colon \quad \bordermatrix{& l' & r' \cr t' & (a_{00}, b_{00}) & (a_{10}, b_{10}) \cr b' & (a_{01}, b_{01}) & (a_{11}, b_{11})}.
\end{equation}
\textup{The EWL quantum counterparts $\Gamma_{EWL}$ and $\Gamma'_{EWL}$ for these games are specified by triples (\ref{2ewl}), where in this case $N = \{1,2\}$, $D_{i} = D'_{i} = \mathsf{SU}(2)$ and the measurement operators take the form}
\begin{equation}
(M_{1}, M_{2}) = \sum_{j_{1},j_{2} = 0,1}(a_{j_{1}j_{2}}, b_{j_{1}j_{2}})P_{j_{1}j_{2}}, \quad (M'_{1}, M'_{2}) = \sum_{j_{1},j_{2} = 0,1}(a_{j_{1}j_{2}}, b_{j_{1}j_{2}})P_{j_{2}j_{1}},
\end{equation}
\textup{where $P_{j_{1}j_{2}} = |j_{1}j_{2}\rangle \langle j_{1}j_{2}|$. Let us set a mapping $\tilde{f} = (\eta, (\tilde{\varphi}_{1}, \tilde{\varphi}_{2}))$ with $\eta(i) = i$ and $\tilde{\varphi}_{i}(U_{i}(\theta_{i}, \alpha_{i}, \beta_{i}) = U'_{i}(\pi - \theta_{i}, \pi/4-\beta_{i}, \pi/4-\alpha_{i})$) for $i=1,2$. An easy computation shows that}
\begin{align}\label{sfj}
|\Psi'\rangle &= J^{\dag}\tilde{f}(U_{1}\otimes U_{2})J|00\rangle\nonumber\\ &= J^{\dag}\left(U_{1}\left(\pi-\theta_{1}, \frac{\pi}{4}-\beta_{1}, \frac{\pi}{4}-\alpha_{1}\right)\otimes U_{2}\left(\pi-\theta_{2}, \frac{\pi}{4}-\beta_{2}, \frac{\pi}{4} - \alpha_{2}\right)\right)J|00\rangle \nonumber \\ &=SFJ^{\dag}(U_{1}(\theta_{1}, \alpha_{1}, \beta_{1})\otimes U_{2}(\theta_{2}, \alpha_{2}, \beta_{2}))J|00\rangle = SF|\Psi\rangle, 
\end{align}
\textup{where $S$ has the outer product representation $S = |00\rangle \langle 00| + |01\rangle \langle 10| + |10\rangle \langle 01| + |11\rangle \langle 11|$ and $F = |00\rangle \langle 00| + |01\rangle \langle 10| + |10\rangle \langle 01| - |11\rangle \langle 11|$. Application of equation~(\ref{sfj}) gives}
\begin{equation}
u'_{i}\left(\tilde{f}(U_{1}\otimes U_{2})\right) = \langle \Psi'|M'_{i}|\Psi'\rangle = \langle \Psi|FSM'_{i}SF|\Psi\rangle = \langle \Psi|M_{i}|\Psi\rangle = u_{i}(U_{1}\otimes U_{2}).
\end{equation}
\textup{As a result, games produced by $\Gamma_{EWL}$ and $\Gamma'_{EWL}$ are strongly isomorphic. This fact, however, is not sufficient to guarantee the isomorphism between $\Gamma$ and $\Gamma'$. Indeed, one can check that there is no $f=(\eta, (\varphi_{1}, \varphi_{2}))$ to satisfy $u_{i}(s) = u'_{\eta(i)}(f(s))$ for each $s \in \{t,b\}\times \{l,r\}$ and $i=1,2$. Alternatively, given specific payoff profiles $(a_{00},b_{00}) = (4,4), (a_{01}, b_{01}) = (1,3), (a_{10}, b_{10}) = (3,1), (a_{11}, b_{11}) = (2,2)$, we can find three Nash equilibria in the game $\Gamma$ and just one in the game $\Gamma'$. Hence, by Lemma~\ref{lemma1} games (\ref{antidiagonal}) are not isomorphic. }
\end{example}
%%%%%%%%%%%%%%%%%%%%%%%%%%%%%%%%%%%%%%%%%%%%%%%%%%%%%%%%%%%%%%%%%%%%%%%%%%%%%%%%%%%%%
%%%%%%%%%%%%%%%%%%%%%%%%%%%%%%%%%%%%%%%%%%%%%%%%%%%%%%%%%%%%%%%%%%%%%%%%%%%%%%%%%%%%%%%
%%%%%%%%%%%%%%%%%%%%%%%%%%%%%%%%%%%%%%%%%%%%%%%%%%%%%%%%%%%%%%%%%%%%%%%%%%%%%%%%%%%%%
\section{Conclusions}
The theory of quantum games has no rigorous mathematical structure. There are no formal axioms, definitions that would give clear directions of how a quantum game ought to look like. In fact, only one condition is taken into consideration. It says that a quantum game ought to include the classical way of playing the game. As a result, this allows us to define a quantum game scheme in many different ways. The scheme we have studied in the paper is definitely ingenious. It has made a significant contribution to quantum game theory. However, it leaves the freedom of choice of the players' strategy sets. Our criterion for quantum strategic game schemes requires the quantum model to preserve strong isomorphism. This specifies the strategy sets to be $\mathsf{SU}(2)$. We have shown that a proper subset of $\mathsf{SU}(2)$ in the EWL scheme may imply different quantum counterparts of the same game-theoretical problem. In that case, the resulting quantum game (in particular, its Nash equilibria) depends on the order of players' strategies in the input bimatrix game. Hence, given a classical game, for example the Prisoner's Dilemma, we cannot say anything about the properties of the EWL approach with the two-parameter unitary strategies until we specify an explicit bimatrix for that game. This is not the case in the EWL scheme with $\mathsf{SU}(2)$ where, given a classical bimatrix game or its isomorphic counterpart, we always obtain the same from the game-theoretical point of view quantum game.
\section*{Acknowledges}
We would like to thank the Reviewers for discussion which undoubtedly improved the quality of our work.

\noindent This work was supported by the Ministry of Science and Higher Education in Poland under the project Iuventus Plus IP2014 010973 in the years 2015–2017.

%%%%%%%%%%%%%%%%%%%%%%%%%%
%%%%%%%%%%%%%%%%%%%%%%%%%%%%%%%%%%%%%%%%%%%%%%%%%%%%%%%%%%%%%%%%%%%%%%%%%%%%%%%%%%
%%%%%%%%%%%%%%%%%%%%%%%%%%%%%%%%%%%%%%%%%%%%%%%%%%%%%%%%%%%%%%%%%%%%%%%%%%%%%%%%%%%

%%%%%%%%%%%%%%%%%%%%%%%%%%%
%%%%%%%%%%%%%%%%%%%%%%%%%%%%%%%%%%%%%%%%%%%%%

%%%%%%%%%%%%%%%%%%%%%%%%%%%%%%%%%%%%%%%%%%%

\end{document}